\newtheorem{theorem}{Theorem} 
\newtheorem{proposition}[theorem]{Proposition} 
\newtheorem{lemma}[theorem]{Lemma}
\theoremstyle{definition}
\newtheorem{definition}{Definition}
\theoremstyle{remark}
\author{Dashiell E.A.\,Fryer \\ The University of Illinois at Urbana-Champaign}
\title{On the Existence of General Equilibrium in Finite Games and General Game Dynamics}
\date{\today}
\begin{document}
\maketitle

\begin{abstract}
A notion of incentive for agents is introduced which leads to a very general notion of an equilibrium for a finite game. Sufficient conditions for the existence of these equilibria are given. Known existence theorems are shown to be corollaries to the main theorem of this paper. Furthermore, conditions for the existence of equilibria in certain symmetric regions for games are also given. 

From the notion of general equilibrium, a general family of game dynamics are derived. This family incorporates all canonical examples of game dynamics. A proof is given for the full generality of this system. 
\end{abstract}

\section{Notation and Definitions}
We shall denote the finite set of agents by $N = \{1,2,\ldots,n\}$ for some $n\in\mathbb{N}$. Each agent $i$ is endowed with a finite set of pure strategies, which will be denoted $S_i = \{1,2,\ldots,s_i\}$, with $s_i\in\mathbb{N}$ as well. To allow the agents to mix their strategies, they may choose strategies from the simplex on $s_i$ vertices, \[\Delta_i=\left\{x_i\in\mathbb{R}^{s_i}\left|x_{i\alpha}\geq 0, \sum_\alpha x_{i\alpha}=1\right.\right\},\] which is the convex hull of $S_i$, or equivalently the space of probability distributions over the finite set. For simplicity we will embed $S_i$ in $\Delta_i$ such that $\alpha\in S_i \mapsto e_{i\alpha}\in\mathbb{R}^{s_i}$ where $e_{ik}$ is the $k$th standard unit vector in the Euclidean space $\mathbb{R}^{s_i}$. We denote $S=\times_i S_i$ and $\Delta=\times_i \Delta_i$ as the pure and mixed strategy spaces respectively for the game. 

It is often convenient to denote the pure and mixed strategy spaces without a particular player; $S_{-i}$, and $\Delta_{-i}$ respectively. We define $S_{-i} = \times_{j\neq i} S_j$, and $\Delta_{-i}=\times_{j\neq i} \Delta_j$. Elements in these sets can be interpreted many different ways. In particular $S_{-i}$ is a $s_{-i}=\frac{|S|}{s_i}$ dimensional space and we would prefer to identify elements in this space with standard unit vectors in $\mathbb{R}^{s_{-i}}$ as before. Unfortunately, there are $s_{-i}!$ ways to accomplish this. In practice, we will only use this identification when we will sum over all possible combinations of pure strategies. Using a different identification will simply result in a permutation of terms in a finite sum, which of course has no effect. $k\in S_{-i}$ is a multi-index given by $(k_1,\ldots,k_{i-1},k_{i+1},\ldots,k_n)$. Our embedding, given by $k\in S_{-i}\mapsto e_{-i\beta}\in\mathbb{R}^{s_{-i}}$, extends to $\Delta_{-i}$ such that $x_{-i}\in\Delta_{-i} = \sum_\beta x_{-i\beta}e_{-i\beta}$ with $x_{-i\beta} = \prod_{j\neq i} x_{jk_j}$. If we have a single agent we will interpret $S_{-i}$, $\Delta_{-i}$, $s_{-i}$, and $x_{-i}$ as $S$, $\Delta$, $s$ and $x$ respectively.

We will also adopt a convention for replacement for part of a strategy profile, $x\in\Delta$. We write $(t_i,x_{-i})\in\Delta = (x_1, x_2,\ldots,t_i,\ldots,x_n)$, where the $i$th component of $x$ has been replaced by another strategy $t_i\in\Delta_i$. 

Each agent will have a utility function defined over the set of all possible combinations of pure strategies $S$. We will denote this utility \[u_i: S\rightarrow\mathbb{R}.\] These utility functions have unique $n$-linear extensions to $\Delta$ given by \[u_i(x)=\sum_\alpha x_{i\alpha}u_i(e_{i\alpha},x_{-i}) = x_i^TA_ix_{-i},\] where $a_{i\alpha\beta} = u_i(e_{i\alpha},e_{-i\beta})$ and $A_i = \{a_{i\alpha\beta}\}$. We will simply refer to these extensions as the utility functions from now on.
\section{Game Equilibrium}
The Nash equilibrium~\cite{nash1950equilibrium} is ubiquitous throughout game
theory. The rise of evolutionary game theory has put new emphasis on dynamics of rationality as opposed to static equilibrium concepts. Most of these dynamic models are focused on either the Nash equilibrium itself or some refinement of the Nash equilibrium, eg. evolutionary stable strategies (ESS)~\cite{smith1974theory}, $\epsilon$-equilibrium~\cite{everett1957recursive}, etc. However, the question of applicability of the Nash equilibrium to actual human actors is still open. Often, in practice, the Nash equilibrium does not approximate actual behavior in a given game; see Prisoner's Dilemma~\cite{rapoport1965prisoner} or Traveler's Dilemma~\cite{basu1994traveler,basu2007traveler} for instance.

We open up the interpretation of an equilibrium by first generalizing the notion of incentive for an agent. In the sequel we will derive from this interpretation a family of differential equations that can account for different updating procedures used by agents. First however, we will show there exists equilibrium in games with general incentives requiring minimal conditions.

\subsection{Incentive Equilibrium}

We begin the treatment of general equilibrium by starting with Nash's second proof of existence in finite games~\cite{nash1951non}. \begin{definition}
	A strategy profile $x\in \Delta$ is a \emph{Nash equilibrium} if and only if \[ u_i(x) = \max_{t\in \Delta_i} u_i(t,x_{-i}),\ \forall i.\] 
\end{definition}
We may simplify this definition by further linearly extending the utility functions to all of $\mathbb{R}^m$, where $m = \prod_i s_i$. This results in $m$-linear functions which are harmonic on all of $\mathbb{R}^m$. We may therefore invoke the maximum principle on the closed convex space $\Delta$ recursively to deduce that the $u_i$'s are maximized (and minimized) in $S$. Therefore, \[\max_{t\in \Delta_i} u_i(t,x_{-i}) = \max_\alpha u_i(e_{i\alpha}, x_{-i}),\] and thus we can give an equivalent definition for the Nash equilibrium as follows:
\begin{definition}
	A strategy profile $x\in \Delta$ is a \emph{Nash equilibrium} if and only if \[ u_i(x) = \max_\alpha u_i(e_{i\alpha},x_{-i}),\ \forall i.\]  
\end{definition}
Thus it was natural for Nash to define a class of continuous incentive functions by \[\varphi_{i\alpha}^{Nash}(x) = (u_i(e_{i\alpha},x_{-i}) - u_i(x))_+\] where \[ (x)_+ = \max (0,x). \] It is at this point where we are ready to define the updating protocol by which agents will discreetly change their strategies. We define the map \[ T(x): \times_i\mathbb{R}^{s_i}\rightarrow \times_i\mathbb{R}^{s_i}\] where \[ T(x)_i = \frac{x_i + \sum_\alpha \varphi_{i\alpha}(x)e_{i\alpha}}{1+\sum_\beta \varphi_{i\beta}(x)}.\] It is easily verified that the sum of the coefficients of $T(x)_i$ is 1 if $x_i\in\Delta_i$, however, if $x_{i\alpha} = 0$ we must have $\varphi_{i\alpha}(x) \geq 0$ in order to preserve the simplex. We also require $\sum_\beta\varphi_{i\beta}(x)\neq -1$ for any $x\in\times_i\mathbb{R}^{s_i}$. This leads us to our definition of generalized incentive.
\begin{definition}
	A function $\varphi(x): \times_i\mathbb{R}^{s_i} \rightarrow
	\times_i\mathbb{R}^{s_i}$ is an \emph{incentive function} if and only if it satisfies both of the following conditions for all players $i$:
	\begin{enumerate}
		\item $x_{i\alpha} = 0 \Rightarrow \varphi_{i\alpha}(x) \geq 0,\ \forall \alpha$
		\item $\sum_\beta\varphi_{i\beta}(x)\neq -1$ for any $x\in\times_i\mathbb{R}^{s_i}.$
	\end{enumerate}
\end{definition}
If we have a function defined as above we may simply refer to it as the incentive for the game. 

To complement our definition of incentive we must redefine equilibrium for the
game to account for the general incentive. First, we will produce conditions for
the mapping $T(x)$ to have a fixed point.
\begin{align}
 0 & = T(x)_i - x_i,\ \forall i \\
 & = \frac{x_i + \sum_\alpha \varphi_{i\alpha}(x)e_{i\alpha}}{1+\sum_\beta \varphi_{i\beta}(x)} - x_i \\
 & = \frac{\sum_\alpha \varphi_{i\alpha}(x)e_{i\alpha} - x_i\sum_\beta \varphi_{i\beta}(x)}{1+\sum_\beta \varphi_{i\beta}(x)} \\
& \Leftrightarrow \sum_\alpha \varphi_{i\alpha}(x)e_{i\alpha} = x_i\sum_\beta \varphi_{i\beta}(x) \label{eqnref:parallel}\\
& \Leftrightarrow \varphi_{i\alpha}(x) = x_{i\alpha}\sum_\beta \varphi_{i\beta}(x),\ \forall i, \alpha
\end{align}
Note that at a fixed point, \ref{eqnref:parallel} says that $\varphi_i$ is parallel to $x_i$. Furthermore, $\varphi_{i\alpha}(x)/x_{i\alpha}$ equals the total incentive provided that $x_{i\alpha}\neq 0$. If $\varphi_{i\alpha}(x) = 0$ at a fixed point then either $x_{i\alpha}=0$ or $\sum_\beta\varphi_{i\beta}(x)=0$, but $x_{i\alpha} = 0 \Rightarrow \varphi_{i\alpha}(x)=0$.  An intuitive description of the concept is that agents will achieve equilibrium if they either have no incentive or their incentives are in line with their current strategy for the game. It is convenient to use the notation \[\dfrac{\varphi_i(x)}{x_i} = \left(\dfrac{\varphi_{i1}(x)}{x_{i1}},\ldots,\dfrac{\varphi_{is_i}(x)}{x_{is_i}}\right)\] with the convention that $\varphi_{i\alpha}(x) = 0 \Rightarrow \dfrac{\varphi_{i\alpha}(x)}{x_{i\alpha}} = 0$.
\begin{definition}
A strategy profile $\hat{x}$ is an \emph{incentive equilibrium} if and only if \[ \hat{x}_i\cdot \frac{\varphi_i(\hat{x})}{\hat{x}_i} = \max_{x_i\in \Delta_i} x_i\cdot \frac{\varphi_i(\hat{x})}{\hat{x}_i},\ \forall i. \] 
\end{definition}
Note that if we maximize the right hand side of the equation with respect to $x_i$ under the simplex constraint we must have $\varphi_{i\alpha}(\hat{x})/\hat{x}_{i\alpha}$ all equal wherever $\hat{x}_{i\alpha}\neq 0$ and 0 elsewhere. The left hand side is clearly $\sum_\beta\varphi_{i\beta}(\hat{x})$. Therefore, $T(\hat{x}) = \hat{x} \Leftrightarrow \hat{x}$ is an incentive equilibrium. 

The previous definition can be further simplified using identical arguments as in the standard Nash equilibrium definition. 
\begin{definition}
A strategy profile $\hat{x}$ is an \emph{incentive equilibrium} if and only if \[ \hat{x}_i\cdot \frac{\varphi_i(\hat{x})}{\hat{x}_i} = \max_{\alpha \in S_i} \frac{\varphi_{i\alpha}(\hat{x})}{\hat{x}_{i\alpha}},\ \forall i. \]
\end{definition}
From this definition it is clear that, as in the case of a Nash equilibrium, $\hat{x}_{i\alpha} = 0$ whenever $ \varphi_{i\alpha}(\hat{x})/\hat{x}_{i\alpha}$ is not maximal.

The following lemma will be very useful for proving not only our main theorem that an incentive equilibrium exists in every finite game, but will also allow us to identify equilibrium points in games that have certain symmetries. 
\begin{lemma}
If the incentive is continuous, a fixed point exists for $T$ in any closed convex $U \subset \times_i\mathbb{R}^{s_i}$ that is left invariant by $T$.
\end{lemma}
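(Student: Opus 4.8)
The plan is to reduce the statement to Brouwer's fixed point theorem. The first step is to verify that $T$ really is a continuous map on all of $\times_i\mathbb{R}^{s_i}$. Each coordinate block
$T(x)_i = \bigl(x_i + \sum_\alpha \varphi_{i\alpha}(x)e_{i\alpha}\bigr)\big/\bigl(1+\sum_\beta \varphi_{i\beta}(x)\bigr)$
is a quotient of continuous functions, using the hypothesis that $\varphi$ is continuous; and the denominator never vanishes precisely because incentive condition~(2) forbids $\sum_\beta\varphi_{i\beta}(x) = -1$. Hence $T$ is continuous everywhere, and in particular on $U$.

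Next, because $U$ is assumed to be left invariant by $T$, i.e. $T(U)\subseteq U$, the restriction $T|_U \colon U\to U$ is a continuous self-map of $U$. Since $U$ is convex and closed — and in every setting in which we invoke this lemma (namely $U=\Delta$, or a symmetric subregion of $\Delta$) the set $U$ is in addition bounded, hence compact — Brouwer's fixed point theorem applies: a continuous self-map of a nonempty compact convex subset of a Euclidean space has a fixed point. Therefore there exists $\hat x\in U$ with $T(\hat x)=\hat x$.

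Finally, it is worth recording that, by the fixed-point computation preceding the definition of incentive equilibrium, any such $\hat x$ is automatically an incentive equilibrium; the lemma thus not only guarantees existence but localizes the equilibrium inside $U$, which is exactly what will be needed for the symmetry applications. The only genuinely delicate point is the passage from ``closed and convex'' to ``compact'', since that implication fails for unbounded subsets of $\mathbb{R}^m$; but boundedness is automatic for the sets $U$ of interest (all lie in $\Delta$), so I do not expect it to be an obstacle. Beyond that bookkeeping — confirming that the incentive axioms make $T$ a legitimate continuous self-map — the proof is a one-line appeal to Brouwer.
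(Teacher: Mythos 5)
Your proof is correct and is essentially the paper's own argument: both reduce to Brouwer's fixed point theorem applied to the continuous self-map $T|_U$. You rightly flag the one wrinkle the paper glosses over --- Brouwer requires compactness, not merely closedness and convexity, so the lemma as stated implicitly assumes $U$ is bounded (which holds in every application, since the relevant $U$ lie in $\Delta$).
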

\begin{proof}
Given the assumptions, $T$ maps from $U$ to $U$ continuously and thus Brouwer's fixed point theorem guarantees the existence of a fixed point for $T$ in $U$.
\end{proof}

We now have all the tools necessary to prove the main theorem.
\begin{theorem}
If the incentive is continuous, an incentive equilibrium point $\hat{x}$ exists for any finite game.
\end{theorem}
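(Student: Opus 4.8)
The plan is to deduce the theorem from the fixed-point lemma just proved, by restricting $T$ to a closed convex set and then converting the resulting fixed point into an incentive equilibrium via the equivalence ``$T(\hat x)=\hat x\Leftrightarrow\hat x$ is an incentive equilibrium'' established above. The natural candidate is $U=\Delta$ itself, which is closed and convex (indeed compact) as a finite product of simplices; since the incentive is assumed continuous, the only hypothesis of the lemma still to be checked is that $T$ leave $\Delta$ invariant, i.e.\ that $T(\Delta)\subseteq\Delta$. I would make this the crux of the proof.

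If the incentive is nonnegative --- as $\varphi^{Nash}$ is --- the invariance is immediate: every numerator $x_{i\alpha}+\varphi_{i\alpha}(x)$ is then $\geq 0$, the denominator $1+\sum_\beta\varphi_{i\beta}(x)$ is $\geq 1>0$, and the coordinates of $T(x)_i$ sum to $1$, so $T(x)_i\in\Delta_i$. For a general continuous incentive this can fail (one coordinate can be driven below $0$ while another overshoots $1$), so I would first replace $\varphi$ by the rectified, average-adjusted incentive
\[
\psi_{i\alpha}(x)=\left(\varphi_{i\alpha}(x)-x_{i\alpha}\sum_\beta\varphi_{i\beta}(x)\right)_+,
\]
which is continuous, is nonnegative, and satisfies conditions (1) and (2) in the definition of an incentive function trivially because $\psi\geq 0$. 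By the previous remark $T_\psi(\Delta)\subseteq\Delta$, so the lemma applied with $U=\Delta$ produces a fixed point $\hat x\in\Delta$ of $T_\psi$.

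It then remains to show that any fixed point $\hat x\in\Delta$ of $T_\psi$ is an incentive equilibrium for the original $\varphi$. Rerunning the fixed-point computation above with $\psi$ in place of $\varphi$, the condition $T_\psi(\hat x)=\hat x$ reads $\psi_{i\alpha}(\hat x)=\hat x_{i\alpha}\Psi_i$ for all $i,\alpha$, where $\Psi_i:=\sum_\beta\psi_{i\beta}(\hat x)\geq 0$; write also $\Phi_i:=\sum_\beta\varphi_{i\beta}(\hat x)$. A short case split finishes it. If $\Psi_i>0$, then on the support of $\hat x_i$ one reads off $\varphi_{i\alpha}(\hat x)=\hat x_{i\alpha}(\Phi_i+\Psi_i)$, off the support condition (1) forces $\varphi_{i\alpha}(\hat x)=0=\hat x_{i\alpha}(\Phi_i+\Psi_i)$, and summing over $\alpha$ gives $\Phi_i=\Phi_i+\Psi_i$, contradicting $\Psi_i>0$; hence $\Psi_i=0$, so $\varphi_{i\alpha}(\hat x)-\hat x_{i\alpha}\Phi_i\leq 0$ for every $\alpha$ while these quantities sum to $0$ over $\alpha$, forcing $\varphi_{i\alpha}(\hat x)=\hat x_{i\alpha}\Phi_i$ for all $i,\alpha$. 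Thus $\varphi_i(\hat x)$ is parallel to $\hat x_i$, i.e.\ $T(\hat x)=\hat x$ for the original incentive, so $\hat x$ is an incentive equilibrium. The main obstacle is precisely the invariance step $T(\Delta)\subseteq\Delta$: passing to the surrogate $\psi$ is what secures it, and the delicate point is then to verify that $\psi$ has exactly the same equilibria as $\varphi$ --- especially the degenerate case $\Psi_i=0$, where one needs both that $\hat x\in\Delta$ (to sum the coordinatewise inequalities) and condition (1) of the incentive axioms.
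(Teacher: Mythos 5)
Your proof is correct, but it takes a genuinely different route from the paper's, and in doing so it quietly repairs a real gap. The paper's proof is one line: it asserts that the incentive axioms make $T$ a continuous map from $\Delta$ to $\Delta$ and invokes the lemma. That invariance claim is not actually implied by the definition of an incentive function: condition (1) only controls the sign of $\varphi_{i\alpha}(x)$ where $x_{i\alpha}=0$, so at an interior point a sufficiently negative $\varphi_{i\alpha}(x)$ (permitted, e.g., by the replicator incentive when payoff differences are large) drives the coordinate $\bigl(x_{i\alpha}+\varphi_{i\alpha}(x)\bigr)/\bigl(1+\sum_\beta\varphi_{i\beta}(x)\bigr)$ below zero and $T(x)\notin\Delta$. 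You identify exactly this obstacle and route around it by passing to the rectified surrogate $\psi_{i\alpha}(x)=\bigl(\varphi_{i\alpha}(x)-x_{i\alpha}\sum_\beta\varphi_{i\beta}(x)\bigr)_+$, for which invariance of $\Delta$ is immediate, and then showing via the case split on $\Psi_i$ that every fixed point of $T_\psi$ satisfies $\varphi_{i\alpha}(\hat x)=\hat x_{i\alpha}\sum_\beta\varphi_{i\beta}(\hat x)$, i.e.\ is a fixed point of the original $T$ and hence an incentive equilibrium; I checked the two cases and both are sound (the $\Psi_i>0$ case correctly combines $(\varphi_{i\alpha})_+=0$ off the support with axiom (1) to get $\varphi_{i\alpha}(\hat x)=0$ there, and the $\Psi_i=0$ case correctly uses that nonpositive terms summing to zero all vanish). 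What the paper's approach buys is brevity, at the cost of an unjustified step; what yours buys is an actual proof valid for every continuous incentive satisfying the stated axioms, essentially by showing that the theorem reduces to the nonnegative-incentive case without changing the equilibrium set.
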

\begin{proof}
We have defined the incentive functions such that the updating protocol $T(x)$ defined above is a continuous map from $\Delta$ to $\Delta$ and thus by our lemma, there exits an $\hat{x}\in\Delta$ such that $T(\hat{x}) = \hat{x}$. $T(\hat{x}) = \hat{x} \Leftrightarrow \hat{x}$ is an incentive equilibrium.
\end{proof}

Other consequences of our lemma can also be obtained quite simply. For example, suppose a two player game has the property that the incentive is continuous and $\varphi_1(x) = \varphi_2(x)$ for every $x\in\Delta$ such that $x_1=x_2$. The closed convex subset $U=\{x\in\Delta|x_1=x_2\}$ is left invariant by $T$ and thus an incentive equilibrium point exists in $U$. We can generalize this to symmetric $n$-player games. Denote the symmetric group on a finite set $X$ as $\text{Sym}(X)$.

\begin{proposition}
Suppose all players have the same pure strategy space, $S_1$. Let $U = \{x\in\Delta| x_{1\sigma_i(\alpha)} = x_{i\alpha}, \text{for some } \sigma_i\in\text{Sym}(S_1)\}$. If $\varphi(x)$ is a continuous incentive and $\varphi_{1\sigma_i(\alpha)}(x) = \varphi_{i\alpha}(x)$ for every $x\in U$, then an incentive equilibrium exists in $U$.
\end{proposition}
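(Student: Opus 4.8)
The plan is to verify the three hypotheses of the Lemma for the set $U$ and then read off the conclusion from the equivalence, established just before the main theorem, between fixed points of $T$ and incentive equilibria. I read the statement as fixing one tuple of permutations $(\sigma_1,\dots,\sigma_n)$ with $\sigma_1=\mathrm{id}$, so that $U=\{x\in\Delta\mid x_{1\sigma_i(\alpha)}=x_{i\alpha}\text{ for all }i,\alpha\}$ and the hypothesis $\varphi_{1\sigma_i(\alpha)}(x)=\varphi_{i\alpha}(x)$ refers to that same tuple. First I would check that $U$ is nonempty, compact, and convex: given any $y\in\Delta_1$, setting $x_{i\alpha}=y_{\sigma_i(\alpha)}$ produces (since each $\sigma_i$ is a bijection) a point of $U$; and $U$ is the intersection of $\Delta$ with the linear subspace cut out by the equations $x_{1\sigma_i(\alpha)}-x_{i\alpha}=0$, hence convex, and a closed subset of the compact set $\Delta$, hence compact.

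The key step is to show $T(U)\subseteq U$. Recall that for $x\in\Delta$ one has $T(x)_{i\alpha}=\dfrac{x_{i\alpha}+\varphi_{i\alpha}(x)}{1+\sum_\beta\varphi_{i\beta}(x)}$, and that $T$ is well-defined and continuous on $\Delta$ because $\varphi$ is a continuous incentive and condition~(2) keeps the denominator away from $0$. Fix $x\in U$ and compare $T(x)_{1,\sigma_i(\alpha)}$ with $T(x)_{i\alpha}$. The numerators agree because $x_{1\sigma_i(\alpha)}=x_{i\alpha}$ (as $x\in U$) and $\varphi_{1\sigma_i(\alpha)}(x)=\varphi_{i\alpha}(x)$ (by hypothesis, applicable since $x\in U$). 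The denominators agree because re-indexing the sum by the bijection $\sigma_i$ gives $\sum_\beta\varphi_{1\beta}(x)=\sum_\beta\varphi_{1\sigma_i(\beta)}(x)=\sum_\beta\varphi_{i\beta}(x)$. Hence $T(x)_{1,\sigma_i(\alpha)}=T(x)_{i\alpha}$ for all $i,\alpha$, i.e.\ $T(x)\in U$.

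Finally, I would invoke the Lemma: $\varphi$ is continuous and $U$ is closed, convex, nonempty, and left invariant by $T$, so $T$ has a fixed point $\hat x\in U$. By the equivalence noted before the main theorem, $T(\hat x)=\hat x$ exactly says that $\hat x$ is an incentive equilibrium, and it lies in $U$, which is what the proposition asserts.

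The step I expect to be the main obstacle is the bookkeeping around $U$: one must fix the interpretation so that $U$ is convex (if the permutations were allowed to vary with $x$, $U$ need not be convex — e.g.\ for two players with two strategies it becomes a union of two segments), and one must carry the re-indexing by $\sigma_i$ through the denominator, not just the numerator, so that the full ratio $T(x)_{1,\sigma_i(\alpha)}$ is preserved. Everything else is a direct application of the preceding lemma and the fixed-point/equilibrium equivalence.
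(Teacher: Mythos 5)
Your proof is correct and takes essentially the same route as the paper's: verify that $U$ is closed and convex, show $T_{i\alpha}(x)=T_{1\sigma_i(\alpha)}(x)$ for $x\in U$ (matching numerators via the hypotheses and matching denominators via re-indexing the sum by the bijection $\sigma_i$), and then apply the lemma to get a fixed point of $T$ in $U$, which is an incentive equilibrium. Your observation that the $\sigma_i$ must be read as a fixed tuple of permutations (otherwise $U$ need not be convex) is a clarification the paper leaves implicit, but it does not change the argument.
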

\begin{proof}
$U$ is closed and convex. U is left invariant by $T$ as \begin{align*}
x'_{i\alpha} & = T_{i\alpha}(x)
 = \frac{x_{i\alpha} + \varphi_{i\alpha}(x)}{1+\sum_\beta \varphi_{i\beta}(x)} \\
& = \frac{x_{1\sigma_i(\alpha)} + \varphi_{1\sigma_i(\alpha)}(x)}{1+\sum_\beta \varphi_{1\sigma_i(\beta)}(x)} = T_{1\sigma_i(\alpha)}(x) \\ & = x'_{1\sigma_i(\alpha)}
\end{align*} Thus our lemma guarantees the existence of a fixed point of $T$ in $U$ which is a subset of $\Delta$. As a fixed point in $\Delta$, it is also incentive equilibrium.
\end{proof}

\section{Examples}

We will now discuss some specific examples of incentives. 

\subsection{Canonical Examples}

We will refer to a collection of incentives that have been well studied in other venues. They are all very closely related to the Nash equilibrium. In fact they all share the property that an interior Nash equilibrium is an incentive equilibrium.

\subsubsection{Nash Incentive}

Above it was noted that Nash defined a family of functions as \[\varphi^{Nash}_{i\alpha}(x) = (u_i(e_{i\alpha},x_{-i}) -u_i(x))_+\] for every player $i$ and every strategy $\alpha\in S_i$. $\varphi^{Nash}(x)$ is trivially an incentive function as it is non-negative in every component at every $x\in\Delta$. Clearly this incentive is continuous as $f(x) = (x)_+$ and $u_i(x)$ are both continuous. Thus an incentive equilibrium exists for every finite game. 

We expect that the incentive equilibrium must in fact be a Nash equilibrium. If $\hat{x}$ is a Nash equilibrium, $\varphi(\hat{x}) = 0$ and thus every Nash equilibrium is an incentive equilibrium. Conversely, if $\hat{x}$ is an incentive equilibrium we have several possibilities. If $\sum_\beta \varphi_{i\beta}^{Nash}(\hat{x}) = 0$, $\hat{x}$ is a Nash equilibrium. It suffices then to consider the case when the sum is positive. Also we need not consider the case when $\hat{x}_{i\alpha}=0$ as this occurs if and only if $\varphi_{i\alpha}^{Nash}(\hat{x}) = 0$. Thus we can assume $\hat{x}_{i\alpha} > 0$. This can occur in an incentive equilibrium if and only if $\varphi_{i\alpha}^{Nash}(\hat{x}) >0$, which implies $u_i(e_{i\alpha},\hat{x}_i) > u_i(\hat{x})$ for any $\alpha$ such that $\hat{x}_{i\alpha}>0$. For these $\alpha$ the inequality, $\hat{x}_{i\alpha}u_i(e_{i\alpha},\hat{x}_i) > \hat{x}_{i\alpha}u_i(\hat{x})$, must also hold. If we sum over these $\alpha$ we obtain the impossible condition, $u_i(\hat{x}) > u_i(\hat{x})$. Thus at equilibrium $\varphi^{Nash}(\hat{x}) = 0$, which implies $\hat{x}$ is a Nash equilibrium.

\subsubsection{Replicator Incentive}
Interestingly, the replicator dynamics ~\cite{taylor1978evolutionary}, specifically the $n$-population models, given by $\dot{x}_{i\alpha} = x_{i\alpha}(u_i(e_{i\alpha},x_{-i})-u_i(x))$, provide incentive functions as well. Define $\varphi_{i\alpha}^{R}(x) = x_{i\alpha}(u_i(e_{i\alpha},x_{-i})-u_i(x))$. $\varphi^R(x)$ is an incentive function since $\sum_\beta \varphi_{i\beta}^R(x) = 0$ and $x_{i\alpha}=0 \Rightarrow \varphi_{i\beta}^R(x) = 0$. The replicator incentive function is not just continuous but analytic and thus easily satisfies the condition for existence of incentive equilibrium. 

The classification of these equilibrium points are quite easy given the total incentive is identically zero. We must have all $\varphi^R_{i\alpha}(\hat{x}) = 0$ if $\hat{x}$ is an incentive equilibrium. These functions are zero in three cases; $\hat{x}_{i\alpha} = 0$, $\hat{x}_{i\alpha}=1$, and $u_i(e_{i\alpha},\hat{x}_{-i}) = u_i(\hat{x})$. Thus in the interior of $\Delta$ our equilibrium is a Nash equilibrium. In fact given the last condition all Nash equilibria are replicator incentive equilibria. Finally, the first two conditions tell us that all $x\in S$ are equilibria in contrast to the Nash incentive.

We can actually use many different incentive functions to get the same behavior. The simplest of these is $\varphi_{i\alpha}^R(x) = x_{i\alpha}u_i(e_{i\alpha},x_{-i})$. Notice that the total incentive $\sum_\beta \varphi_{i\alpha}^R(x) = u_i(x)$ for every $x\in\Delta$, which could violate the second condition for an incentive function if the utility for any player is ever $-1$. However, we can translate our payoffs by arbitrary functions $g_i(x)$ in every component for each player $i$. Thus $\varphi^R_{i\alpha}(x) = x_{i\alpha}(u_i(e_{i\alpha},x_{-i})+g_i(x))$ and $\sum_\alpha \varphi^R_{i\alpha}(x) = u_i(x)+g_i(x)$. Furthermore, our equilibrium condition remains unchanged as $x_{i\alpha}(u_i(e_{i\alpha},x_{-i})+g_i(x))=x_{i\alpha}(u_i(x)+g_i(x))$ is satisfied if $x_{i\alpha}$ is $0$ or $1$, or if $u_i(e_{i\alpha},x_{-i}) = u_i(x)$. Thus in any finite game we can translate payoffs without changing equilibrium and every finite game has a translated version where our function is a valid incentive. In only slight contrast to the previous case the equilibria occurs when $\varphi_{i\alpha}^R(\hat{x}) \neq 0$. In general we can use $\varphi_{i\alpha}^R(x)=x_{i\alpha}(u_i(e_{i\alpha},x_{-i})+g_i(x))$ as our incentive function as long as $g(x)$ translates each of the minimum payoffs to any value greater than $-1$ (or maximum payoffs to values less than $-1$). 

\subsubsection{Projection Incentive}

The projection dynamic, originally introduced by Nagurney and Zhang\cite{nagurney1996projected} and presented in the style of Lahkar and Sandholm~\cite{lahkar2008projection}, is given by \[\dot{x}_i = F_i(x),\] where \begin{equation*}
F_i(x) = \begin{cases} (Ax)_i - \frac{1}{|S(Ax,x)|}\sum\limits_{j\in S(Ax,x)}(Ax)_j & \text{if}\quad i\in S(Ax,x) \\
0 & \text{otherwise},
\end{cases}
\end{equation*} for a single population. $S(Ax,x)$ is the set of all strategies in the support of $x$ as well as any collection of strategies that maximize the average presented in the first case above. $F_i(x)$ is clearly an incentive function for the game as $\sum_i F_i(x) = 0$ for every $x\in\Delta$. When $x_i=0$, $i\not\in\text{supp}(x)$ so $F_i(x) = 0$ or $(Ax)_i$ is part of a set that maximizes the average. In the second case $(Ax)_i$ itself must be maximal and thus $F_i(x)=0$. 

It is shown in Sandholm, Dokumaci, and Lahkar~\cite{sandholm2008projection} that the replicator dynamics and the projection dynamics share many features. Most important in this discussion are the interior equilibria, which they showed to be Nash equilibria just as is the case for the replicator incentive. However, in this case the discontinuity at the boundary means the main theorem does not apply.

\subsubsection{Best Reply}

The best reply incentive is quite easy to understand. It was originally introduced by Gilboa and Matsui~\cite{gilboa1991social}. Nash defined a function $B(x)$ in his original proof of existence, which is a set valued function that returns all of the pure strategy best replies to the strategy $x$. For our purposes we will use the function used by Young \cite{young2001individual}, where \[ BR_{i\alpha}(x) = \begin{cases} 1 & \text{if}\quad e_{i\alpha}\in B(x) \\ 0 & \text{otherwise}.\end{cases} \] To make this a function, there is a tiebreaker assumed so that only one pure strategy is taken to be the best reply. This function is a valid incentive since $BR_{i\alpha}(x)\geq 0$ and $\sum_\alpha BR_{i\alpha}(x) = 1$
for every $x\in\Delta$. The main theorem does not apply directly, however the incentive equilibria for this are exactly Nash equilibria. Thus the existence of a Nash equilibrium in all finite games guarantees the existence of an incentive equilibrium for the best reply incentive.

\subsubsection{Logit Incentive}

The logit incentive was originally introduced as a smoothed version of best reply by Fudenberg and Levine~\cite{fudenberg1998theory}. The incentive is defined as \[ \varphi^L_{i\alpha}(x) = \frac{\exp(\eta^{-1}u(e_{i\alpha},x_{-i}))}{\sum_\beta\exp(\eta^{-1}u(e_{i\beta},x_{-i}))}\] and is obviously a valid incentive function since $\varphi^L_{i\alpha}(x)>0$ for every $x$ and $\sum_\beta \varphi^L_{i\beta}(x) = 1$. The incentive is continuous and thus the main theorem applies. The incentive equilibria are exactly the fixed points of $\varphi^L(x)$.

It should be noted that as $\eta\rightarrow 0$ the incentive converges to the best reply incentive. On the other end of the spectrum as $\eta\rightarrow\infty$ this incentive approaches the zero incentive (see below). The variable $\eta$ is thought of as `noise', and is essentially a measure of error in logic, much like the $\epsilon$ in the $\epsilon$-equilibrium (see below). 

\subsubsection{Smith Incentive}

The Smith incentive was developed by Micheal J. Smith \cite{smith1984stability} to describe traffic flows. He suggests as a reasonable assumption that the rate which drivers swap from one route, $\beta$, to another route, $\alpha$, is given by the proportion of drivers on route $\beta$ times any additional cost of using $\beta$ over $\alpha$. Thus we can interpret this as an incentive to switch to $\alpha$ as \[ \varphi^S_{i\alpha}(x)=\sum_\gamma x_{i\gamma}(u_i(e_{i\gamma},x_{-i})-u_i(e_{i\alpha},x_{-i}))_+ \] where Smith would drop the $i$ as there is only a single population of drivers. 

The above function is always non-negative and thus is a valid incentive function. It is also obviously continuous thus the main theorem applies. Any Nash equilibrium, $x$, is an incentive equilibrium as $x_{i\alpha}>0 \Leftrightarrow u_i(e_{i\alpha},x_{-i}) = max_\gamma u_i(e_{i\gamma},x_{-i})$ and thus the terms in the above sum are all zero. The converse however is not generally true. The set of incentive equilibria in this case is called Wardrop equilibria \cite{wardrop1952some}.

\subsection{Other Examples}

We can also describe a number of non-canonical examples which may be of interest. 

\subsubsection{The Zero Incentive}

The trivial, or zero, incentive is given by $\varphi(x)=0$ for all $x\in\Delta$. The function is clearly a valid incentive and also trivially satisfies the conditions for the existence of an incentive equilibrium. This of course is not surprising as the zero incentive fixes every point in $\Delta$, and thus all points are incentive equilibria. If only one agent uses the zero incentive to update, it would appear to the opposition that the agent is choosing at random. In fact all elements of $\Delta_i$ are equally likely under this incentive. 

\subsubsection{Epsilon-Nash Incentive}

$\epsilon$-Nash equilibria was first introduced by Everett~\cite{everett1957recursive}. 
\begin{definition} For a fixed $\epsilon>0$, $x\in\Delta$ is an \emph{$\epsilon$-equilibrium} if and only if \[u_i(x) \geq u_i(t_i,x_{-i}) - \epsilon \quad \forall t_i\in\Delta_i,i.\]
\end{definition}
We can make a similar simplification to what we did for the Nash equilibrium. Instead of checking every $x_i\in\Delta_i$ it suffices to check only those strategies in $S_i$. We can therefore use the incentive function \[\varphi^\epsilon_{i\alpha}(x) = (u_i(e_{i\alpha},x_{-i})-u_i(x)-\epsilon)_+.\] This is clearly an incentive as it is always non-negative. It is also continuous which ensures the existence of incentive equilibrium. Of course, we already know that a Nash equilibrium exists in all finite games and a Nash equilibrium is an $\epsilon$-equilibrium for every $\epsilon>0$. 

There are simple examples of games that are repeated indefinitely which do not have Nash equilibria, but do still have $\epsilon$-equilibria for some $\epsilon>0$. While this is beyond the scope of this discussion it is worth mentioning. Within our scope are the finitely repeated prisoner's dilemmas. In these games it can be shown that the strategies tit-for-tat~\cite{axelrod1981evolution} and grim trigger are $\epsilon$-equilibria for some positive $\epsilon$ which depends on the payoffs of the one shot games.

\subsubsection{Simultaneous Updating}

While the notion of the $\epsilon$-equilibrium is very useful, it adds a degree of freedom to the problem of finding equilibria. $\epsilon$ would have to be fit to data in order to make the model viable and must be changed for each new game.

We draw inspiration for a new model from Brams' ``Theory of Moves" \cite{brams1994theory}. In his book he describes a solution concept for $2\times 2$ games that is based on a hypothetical negotiation. It is assumed that the agents begin this negotiation at a point in $S$, then each player does a hypothetical check on what would happen if they moved to their alternative pure strategy. They assume the other player will also switch and this alternating changing of strategy continues until a cycle is complete\footnote{This procedure always takes four moves, as the last move by the opposition returns the negotiation to its original state}. Then the agent, using backward induction on the cycle, decides whether or not to make the first move. The solutions are the collection of possible outcomes given the 4 possible starting positions, giving this the feel of a discrete dynamical system. 

We define an incentive function that takes into account the other players' possible reactions to an agent's move. We notice that if all agents are updating simultaneously then we can be anywhere in $\Delta$. Recall that all of the utility functions are maximized (and minimized) in $S$, so we will only make comparisons on the boundary. Our incentive is defined as \begin{align}
\varphi^{SU}_{i\alpha}(x) &= \sum_\gamma (a_{i\alpha\gamma}-u_i(x))_+ \\ &= \sum_\gamma (u_i(e_{i\alpha},x_{-i})-u_i(x)+ a_{i\alpha\gamma}-u_i(e_{i\alpha},x_{-i}))_+. \end{align} The function is a valid incentive since it is always non-negative, and is continuous which means an incentive equilibrium exists for all finite games. The incentive equilibria for $\varphi^{SU}(x)$ that lie in $S$ are very easily classified: they must be win-win situations in the sense that all players are achieving their maximum payoffs in the game.

We can further generalize this approach by adding a strategic component. The incentive \[ \varphi^{SSU}_{i\alpha}(x) = \sum_\gamma \delta_{i\alpha\gamma}(x)(a_{i\alpha\gamma}-u_i(x))_+, \] with $\delta_{i\alpha\gamma}(x) = 1$ if there exists at least one other player who would benefit from player $i$ changing to strategy $\alpha$, and 0 otherwise. The function $\delta(x)$ could be redefined to ensure any collection of players, including $N$, and thus could be used to describe cooperative game play. While these are valid incentives, they are not generally continuous. It could be possible to produce a smooth approximation using methods similar to those that created the logit incentive from the best reply.




\subsubsection{Altruism}

Altruism is a behavior exemplified by helping others without regard to personal benefit. To that end, an incentive can be described easily to fit this description. \[ \varphi^{A}_{i\alpha}(x) = \left(\min_{j\neq i} (u_j(e_{i\alpha},x_{-i})-u_j(x))) \right)_+ \] This is clearly a valid, continuous incentive and thus the main theorem applies. 

\subsubsection{Pareto Improvements and Coalitions}

Pareto improvements are characterized by the players in the game making changes in a direction in which all players' utilities increase. This is very similar to altruism with the main difference being now that each player must also be concerned with its own success as well as the group. Furthermore, we must note that direction should be interpreted as one of the unit vectors $\pi\in S$. We may write $\pi = (e_{i\alpha},e_{-i\beta})$, where $\beta\in S_{-i}$ or $\pi = (\alpha,\beta)$, taking full advantage of our multiindex, unit vector identification. 

For these intents, the incentive \begin{align*}
\varphi^{PI}_{i\alpha}(x) = & \sum_\beta \left( a_{i\alpha\beta}-u_i(x) \right)_+ \prod_{j\neq i}\left( a_{j\beta_j\pi_{-j}} - u_j(x)\right)_+ \\
 = & \sum\limits_{\pi|\pi_i=\alpha} \prod_{j}\left( a_{j\pi} - u_j(x)\right)_+ 
\end{align*} suits the purpose. If any one player does not gain from a move in direction $\pi$, then the product will be zero for every player in that direction. Otherwise, all players have the same incentive to change to their individual strategy that will achieve $\pi$. The existence theorem can be applied as continuity and the boundary condition are again trivially satisfied. Unfortunately, there are large collections of games with all strategies as incentive equilibrium. Zero sum games are a subset of this family as all strategies are pareto optimal in the sense that no pareto improvements can ever be made. 

This incentive will mimic the actions of the grand coalition where utilities are non-transferable. We can also define a more general incentive that can account for any additional coalition forming behavior. If we denote the power set of the finite set $N$ as $\mathcal{P}(N)$, then \[\varphi^{C}_{i\alpha}(x) = \sum_{\substack{\pi \\ \pi_i=\alpha}} \sum_{\substack{\Omega\in\mathcal{P}(N) \\ i\in\Omega}}\prod_{j\in\Omega}\left( a_{j\pi} - u_j(x)\right)_+ \] will collect any excess payoffs associated with forming any possible coalition with the other players. It retains continuity and the necessary boundary condition and thus incentive equilibrium of this sort also exist for all finite games.

\subsubsection{Margin of Victory and Rivalry}

While the topic of rational game play has been discussed at length, the more general notion of victory has largely been ignored in the literature. A reasonable interpretation for a player winning a game would be to outscore the opposition. To this end the following incentive, \[ \varphi^{MV}_{i\alpha}(x) =  \left( u_i(e_{i\alpha},x_{-i}) - \max_{j \neq i} u_j(e_{i\alpha},x_{-i})\right)_+ \] will always be positive for any alternative that makes its own payoff larger than any other agent's. Continuity is ensured, thus equilibrium of this sort will exist in any finite game. 

A similar concept is that of a rival. Presented here are two incentives based on the notions of hurting a rival and increasing the margin of victory over a specific rival, respectively.
\[ \varphi^{Rival}_{i\alpha}(x) = \left(  u_{\sigma(i)}(x) - u_{\sigma(i)}(e_{i\alpha},x_{-i}) \right)_+ \] 
\[ \varphi^{MVR}_{i\alpha}(x) = \left(  u_i(e_{i\alpha},x_{-i}) - u_{\sigma(i)}(e_{i\alpha},x_{-i}) \right)_+ \] In this context, $\sigma\in Sym(N)$ which fixes no index\footnote{It would seem akward to have a player be its own rival.}. Both are continuous, valid incentives.
\section{Game Dynamics}
The emergence of the replicator equations of Taylor and
Jonker~\cite{taylor1978evolutionary} has created a renewed interest in dynamic
models of rational decision making. There are several examples of these sorts of models including, but not limited to, the logit equation~\cite{fudenberg1998theory}, best reply dynamics~\cite{gilboa1991social}, the Brown-von Neumann-Nash (BNN) equations~\cite{brown1959solutions}, projection dynamics~\cite{nagurney1996projected,sandholm2008projection}, Smith dynamics~\cite{smith1984stability}, and others. Sandholm~\cite{sandholm2010population} derives a family of differential equations, referred to as mean dynamics, given by \[ \dot{x}_i = \sum_{j\in S} x_j\rho_{ji}(u(x),x) - x_i\sum_{j\in S}\rho_{ij}(u(x),x)\] to describe the inflow and outflow of agents to and from a type $i$ within a single population. The $\rho_{ij}$ are supposed to represent the conditional switch rate of an agent switching from type $i$ to type $j$. If one were to specify this probability appropriately then one can recover all of the canonical dynamics listed above.

We seek a similarly flexible model but with incentive as the governing concept. We will proceed in such a way as to derive the BNN equations as introduced by Brown and von Neumann. As we have seen we can describe general equilibrium in games by way of incentive functions. We then allow agents to update their strategies via a revision protocol, $T(x)$ given by \[T_i(x) = \frac{x_i + \sum_\alpha \varphi_{i\alpha}(x)e_{i\alpha}}{1+\sum_\beta\varphi_{i\beta}(x)}.\] If we repeat this mapping we can think of it as a discrete time dynamical system defined recursively by $x^t = T(x^{t-1})$, where the superscript here is to denote the time step and not an exponent. 

\subsection{Incentive Dynamics}
Instead of working with the discrete time system above, we prefer to work with a continuous time differential equation if possible. To facilitate this endeavor we will redefine every incentive function to have a simple time dependence. That is \[ \tilde{\varphi}(x) := t\varphi(x). \] However, any change to the incentive function must also have an effect on the revision protocol, thus we write \begin{align*} x_i'=T_i(x,t)&:=\frac{x_i+\sum_\alpha{\tilde\varphi_{i\alpha}(x)e_{i\alpha}}} {1+\sum_\beta{\tilde\varphi_{i\beta}(x)}} \\
&=\frac{x_i+t\sum_\alpha{\varphi_{i\alpha}(x)e_{i\alpha}}}{1+t\sum_\beta{\varphi_{i\beta}(x)}}. \end{align*}
Furthermore, it is now possible to define the time derivative of $x_i$.
	\begin{align*}		
	\dot{x_i} &:= \lim_{t\rightarrow 0}\frac{x_i'-x_i}{t} \\
				  & = \lim_{t\rightarrow 0}\frac{x_i+t\sum_\alpha{\varphi_{i\alpha}(x)e_{i\alpha}}-x_i-t x_i\sum_\beta{\varphi_{i\beta}(x)}}{t+t^2\sum_\beta{\varphi_{i\beta}(x)}} \\
				  & = \lim_{t\rightarrow 0}\frac{\sum_\alpha{\varphi_{i\alpha}(x)e_{i\alpha}}- x_i\sum_\beta{\varphi_{i\beta}(x)}}{1+t\sum_\beta{\varphi_{i\beta}(x)}}\\
				  &= \sum_\alpha{\varphi_{i\alpha}(x)e_{i\alpha}}- x_i\sum_\beta{\varphi_{i\beta}(x)}
	\end{align*}
In individual coordinates we can write our family of differential equations as \[ \dot{x}_{i\alpha} = \varphi_{i\alpha} - x_{i\alpha}\sum_\beta\varphi_{i\beta}(x).
\] We will refer to this family of equations as incentive dynamics. 

It should be clear that fixed points of this family of differential equations are exactly incentive equilibria. As a consequence we have a number of incentive dynamics we can already describe rather easily. First we note that if we allow the incentive to be given as Nash originally conceived, $\varphi(x) = \varphi^{Nash}(x)$, then we recover the BNN equations as one would naturally expect. 

We note that in the special case when $\sum_\beta\varphi_{i\beta}(x) = 0$ for every $x\in\Delta$, the incentive dynamics reduce to simply $\dot x_{i\alpha} = \varphi_{i\alpha}(x)$. The $n$-population replicator equations are given by \[\varphi^R_{i\alpha}(x) = x_{i\alpha}(u_i(e_\alpha,x_{-i})-u_i(x))\] simply by recognizing that this incentive fits this special case. We have previously noted that there are many incentives that have the same equilibria as the replicator incentive above. These were given by $\varphi^R_{i\alpha}(x) = x_{i\alpha}(u_i(e_\alpha,x_{-i}) + g_i(x))$ where $g(x)$ is an arbitrary function from $\Delta$ to $\mathbb{R}^n$. We derive the replicator equations as follows \begin{align*}
\dot{x}_{i\alpha} & = \varphi^R_{i\alpha}(x) - x_{i\alpha}\sum_\beta\varphi^R_{i\beta}(x) \\
& = x_{i\alpha}(u_i(e_\alpha,x_{-i}) + g_i(x)) - x_{i\alpha}\sum_\beta x_{i\beta}(u_i(e_\beta,x_{-i}) + g_i(x)) \\
& = x_{i\alpha}u_i(e_\alpha,x_{-i}) + x_{i\alpha}g_i(x)-x_{i\alpha}\sum_\beta x_{i\beta}u_i(e_\beta,x_{-i}) - x_{i\alpha}\sum_\beta x_{i\beta}g_i(x) \\
& = x_{i\alpha}u_i(e_\alpha,x_{-i}) + x_{i\alpha}g_i(x)-x_{i\alpha}g_i(x)-x_{i\alpha}u_i(x) \\
& = x_{i\alpha}(u_i(e_\alpha,x_{-i})-u_i(x))
\end{align*}

Furthermore, we can recover all possible mean dynamics by defining the incentive \[ \varphi^M_i(x) = \sum_{j\in S} x_j \rho_{ji}(u(x),x).\] The probability of switching from strategy $i$ to $j$ is given by $\rho_{ij}(u(x),x)/R$, where $R$ is constant. Thus $\sum_{j\in S} \rho_{ij}(u(x),x) = R$ for any $i$. Hence $\varphi^M(x)$ induces the mean dynamics as follows,
\begin{align*}
\dot{x}_i & = \varphi^M_i(x) - x_i \sum_{j\in S} \varphi^M_j(x) \\
& = \sum_{j\in S} x_j\rho_{ji}(u(x),x) - x_i \sum_{j\in S} \sum_{i\in S} x_i \rho_{ij}(u(x),x) \\ 
& = \sum_{j\in S} x_j\rho_{ji}(u(x),x) - x_i \sum_{i\in S} \sum_{j\in S} x_i \rho_{ij}(u(x),x) \\
& = \sum_{j\in S} x_j\rho_{ji}(u(x),x) - x_i \sum_{i\in S} x_i \sum_{j\in S} \rho_{ij}(u(x),x) \\
& = \sum_{j\in S} x_j\rho_{ji}(u(x),x) - x_i \sum_{j\in S} \rho_{ij}(u(x),x). \\
\end{align*} 

\begin{table}
\begin{center}
\begin{tabular}{|c|c|p{0.12\textwidth}|} \hline
Incentive, $\varphi_{i\alpha}(x)$ & Incentive Dynamic & Name (Source) \\ \hline \hline 
$0$  & $\dot{x}_{i\alpha} = 0$ & Zero Incentive \\ \hline
$x_{i\alpha}(u_i(e_{i\alpha},x_{-i}) + g_i(x))$ & $\dot{x}_{i\alpha} = x_{i\alpha}(u_i(e_{i\alpha},x_{-i})-u_i(x))$ & Replicator \cite{taylor1978evolutionary} \\ \hline
$(u_i(e_{i\alpha},x_{-i}) -u_i(x))_+$ &
$\begin{array}{l@{}l}
\dot{x}_{i\alpha} &= u_i(e_{i\alpha},x_{-i}) - u_i(x))_+ \\ &- x_{i\alpha}\sum_\beta (u_i(e_{i\beta},x_{-i}) - u_i(x))_+
\end{array}$ & BNN \cite{brown1959solutions} \\ \hline
$BR_{i\alpha}(x)$ & $\dot{x}_{i\alpha} = BR_{i\alpha}(x) - x_{i\alpha}$ & Best Reply \cite{young2001individual} \\ \hline
$\dfrac{\exp(\eta^{-1}u(e_{i\alpha},x_{-i}))}{\sum_\beta\exp(\eta^{-1}u(e_{i\beta},x_{-i}))}$ & $\dot{x}_{i\alpha} = \dfrac{\exp(\eta^{-1}u(e_{i\alpha},x_{-i}))}{\sum_\beta\exp(\eta^{-1}u(e_{i\beta},x_{-i}))} - x_{i\alpha}$ & Logit \cite{fudenberg1998theory} \\ \hline
$\begin{array}{l@{}l} \sum_\gamma & x_{i\gamma}(u_i(e_{i\gamma},x_{-i}) \\ & -u_i(e_{i\alpha},x_{-i}))_+ \end{array}$ & $\begin{array}{l@{}l} x_{i\alpha} &= \sum_\gamma x_{i\gamma}(u_i(e_{i\gamma},x_{-i}) - u_i(e_{i\alpha},x_{-i}))_+ \\ & - x_{i\alpha} \sum_\beta\sum_\gamma  x_{i\gamma}(u_i(e_{i\gamma},x_{-i})-u_i(e_{i\beta},x_{-i}))_+ \end{array}$ & Smith \cite{smith1984stability} \\ \hline
\end{tabular}
\caption{Incentives for canonical dynamics}
\end{center}
\end{table}

\subsection{Generality}

It has been mentioned that there are other dynamical system models for game play. We would then like to know if the model presented here is in fact fully general in the sense that we can achieve all possible game dynamics with an appropriate choice of incentive. In general a game dynamic will have the form \[ \dot{x}_{i\alpha} = F_{i\alpha}(x) \] where we require $F_{i\alpha}(x)$ to preserve the simplex. Therefore, we must have $\sum_\alpha F_{i\alpha}(x) = 0$ for every $x\in\Delta$ and $i\in N$. Also, we must have $F_{i\alpha}(x) \geq 0$ if $x_{i\alpha} = 0$. These conditions make $F(x)$ an incentive function by our definition and our incentive dynamics are exactly $\dot{x}_{i\alpha} = F_{i\alpha}(x)$. Therefore, we can recover any valid game dynamic by an appropriate choice of incentive. As noted above, the incentives that generate a specific dynamic need not be unique. In fact, any valid incentive, $\varphi(x)$, has a dynamically equivalent incentive, $\tilde\varphi_{i\alpha}(x) = \varphi_{i\alpha}(x) - x_{i\alpha}\sum_\beta\varphi_{i\beta}(x)$, with the property, $\sum_\alpha \tilde\varphi_{i\alpha}(x) = 0$. This allows us to drop the restriction on incentives that $\sum_\alpha\varphi_{i\alpha}(x)\neq -1$ as behavior on the boundary is the only requirement that must be retained. 

\begin{definition}[Incentive]
A function $\varphi(x): \times_i\mathbb{R}^{s_i} \rightarrow
	\times_i\mathbb{R}^{s_i}$ is an \emph{incentive function} if and only if it satisfies $x_{i\alpha} = 0 \Rightarrow \varphi_{i\alpha}(x) \geq 0,\ \forall \alpha$.
\end{definition}

\bibliography{refs}
\bibliographystyle{amsalpha}

\end{document}